\documentclass[11pt,letterpaper]{article}

\usepackage{amsmath,amssymb,amsthm}
\usepackage{mathtools}
\usepackage{hyperref}
\usepackage[margin=1in]{geometry}
\usepackage{booktabs}
\usepackage{enumitem}
\usepackage{graphicx}
\usepackage{xcolor}
\usepackage{lineno}

\newtheorem{theorem}{Theorem}[section]

\newtheorem{proposition}[theorem]{Proposition}

\theoremstyle{definition}

\theoremstyle{remark}

\begin{document}

\title{Data Gravity and the Energy Limits of Computation}

\author{
  Wonsuk Lee\\
  SK Hynix Inc., Santa Clara, CA 95054, USA\\
  Seoul National University, Seoul, Korea\\
  \texttt{wonsuk.lee@sk.com, wonsuk.lee@snu.ac.kr}
  \and
  Jehoshua Bruck\\
  Department of Electrical Engineering\\
  California Institute of Technology, Pasadena, CA 91125, USA\\
  \texttt{bruck@caltech.edu}
}

\date{March 27, 2026}

\maketitle

\begin{abstract}
Unlike the von Neumann architecture, which separates computation from
memory, the brain tightly integrates them, an organization that large
language models increasingly resemble. The crucial difference lies in
the ratio of energy spent on computation versus data access: in the
brain, most energy fuels compute, while in von Neumann architectures,
data movement dominates. To capture this imbalance, we introduce the
\emph{operation--operand disjunction constant} $G_d$, a dimensionless
measure of the energy required for data transport relative to
computation. As part of this framework, we propose the metaphor of
\emph{data gravity}: just as mass exerts gravitational pull, large and
frequently accessed data sets attract computation. We develop
expressions for optimal computation placement and show that bringing
the computation closer to the data can reduce energy consumption by a
factor of $G_d^{(\beta - 1)/2}$, where $\beta \in (1, 3)$ captures
the empirically observed distance-dependent energy scaling. We
demonstrate that these findings are consistent with measurements across
processors from 45\,nm to 7\,nm, as well as with results from
processing-in-memory (PIM) architectures. High $G_d$ values are
limiting; as $G_d$ increases, the energy required for data movement
threatens to stall progress, slowing the scaling of large language
models and pushing modern computing toward a plateau. Unless
computation is realigned with data gravity, the growth of AI may be
capped not by algorithms but by physics.
\end{abstract}

\noindent\textbf{Keywords:} Computer Architecture, Energy Efficiency,
Information Mass, Data-Centric Computing, Disjunction Constant,
Processing-in-Memory.

\section{Introduction}

We present a mathematical framework for the challenges of energy
efficiency of modern computing systems. This framework models the
physical constraints governing the construction and interaction of
operations (compute) and operands (data).

Data sets with high computational relevance create hotspots that pull
in operations. We call this effect \textit{data gravity}. The
\textit{operation-operand disjunction constant}, denoted $G_d$,
measures energy consumption for data transport relative to energy
consumption for computational operations. It quantifies how much an
architecture resists this natural pull.

A smaller $G_d$ indicates that computation is well-aligned with data
locality, minimizing unnecessary data movement and adhering to the
architecture's intrinsic \textit{data gravity}. In contrast, larger
$G_d$ reflects high energy costs from separating operations from
operands.

Modern computing architectures force separation between operations and
operands. This separation was reasonable in earlier systems but now
constitutes the primary source of inefficiency. Data movement dominates
energy consumption at advanced technology architectures.

\section{Information Physics}

We formalize data gravity and quantify its impact on energy efficiency.

\subsection{The Energy Law of Data Movement}

Data movement energy has become the dominant constraint in modern
processors and accelerators. The underlying physics is straightforward.

The energy to transmit data over a wire follows from circuit theory.
Digital systems consume energy when charging and discharging capacitive
loads. The energy to send one bit across a wire is $E = \frac{1}{2} C
V^2$, where $C$ is wire capacitance and $V$ is voltage swing. Wire
capacitance increases linearly with length due to parasitic effects
($C \propto d$). Therefore, energy per bit is proportional to
interconnect length ($E_{\text{bit}} \propto d$)~\cite{brooks2007,horowitz2014}.

Longer interconnects require signal repeaters, drivers, or equalization
circuits to maintain signal integrity. These components add energy
overhead that scales with distance, leading to superlinear scaling.
Studies show that energy per bit follows $E_{\text{bit}} \propto
d^\beta$, with $\beta$ between 1 and 3~\cite{dally1998,esmaeilzadeh2011}.
The total energy to move $N$ bits over distance $d$ becomes
$E \propto N \cdot d^\beta$.

We model interconnect energy as
\begin{equation}
  E_{\text{movement}} = \alpha \cdot N \cdot d^{\beta},
\end{equation}
where $E_{\text{movement}}$ is total energy to move data, $N$ is bits
transferred, $d$ is interconnect distance, $\alpha$ captures technology
constants, and $\beta$ is a scaling exponent compounded from resistance,
capacitance, buffering, and other architectural factors.

\subsection{Information Mass}

For data with Shannon entropy $S$ (bits) and access frequency $f$
(Hz), we define \textit{information mass} as
\begin{equation}
  M = S \cdot f,
\end{equation}
with units of bits per second. This measures the rate at which
processes must interact with the data.

\subsection{Operation-Operand Disjunction Constant}

We define the operation-operand disjunction constant $G_d$ as
\begin{equation}
  G_d = \frac{E_{\mathrm{move}}\ (1\text{ bit},\ 1\text{ meter})}
             {E_{\mathrm{compute}}\ (1\text{ operation})}.
\end{equation}
This dimensionless ratio compares the energy cost of moving data to
the energy cost of computing with it. When $G_d = 1$, moving data
costs the same as computing. When $G_d = 1000$, separation between
operations and operands may dominate energy consumption.

\subsection{Data Gravity Field}

We model how data influences the placement of computational operations.
A data object with information mass $M$ at position $\vec{r}_0$
generates a field at point $\vec{r}$ as
\begin{equation}
  \vec{G}(\vec{r}) = G_d \times
    \frac{M\cdot (\vec{r} - \vec{r}_0)}{|\vec{r} - \vec{r}_0|^{\beta+1}}.
\end{equation}
Here, $\beta$ is a distance exponent empirically determined by the
underlying interconnect properties. This field models the tendency of
computational resources to be drawn toward high information-mass data.
The distance dependency reflects technological factors rather than
fundamental physical forces.

\section{Mathematical Framework of Operation and Operand Placement}

We develop results describing how data gravity governs optimal
allocation and placement of compute operations. The following
Proposition mathematically quantifies the gravitational advantage of
such a system.

\begin{proposition}[Quantification of Energy Improvement]
\label{prop:colocation}
Let $d_{\min}$ be the minimum achievable separation between an
operation and its operand, namely, with operation and operands
co-located. Consider a traditional architecture where
operation-operand separation is $d$, with disjunction constant
$G_d \geq 1$. If
\begin{equation}
  G_d \cdot \left( \frac{d_{\min}}{d} \right) < 1
\end{equation}
holds, co-locating computation with data reduces total energy
consumption by at least $G_d^{(\beta - 1)/2}$ compared to the
separated configuration.
\end{proposition}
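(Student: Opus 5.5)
We would make the statement quantitative with an explicit per-operation energy model built from the energy law $E_{\text{movement}}=\alpha N d^\beta$ and the definition of $G_d$. Measure energy in units of $E_{\mathrm{compute}}(1\text{ operation})$ and treat each operation as moving one operand bit over the operation--operand separation; lengths are in meters, as in the definition of $G_d$. The per-operation energy at separation $r$ is then $E(r)=1+G_d r^\beta$. The quantity to bound is
\begin{equation*}
R=\frac{E(d)}{E(d_{\min})}=\frac{1+a x^\beta}{1+a},\qquad a:=G_d d_{\min}^\beta,\quad x:=\frac{d}{d_{\min}}.
\end{equation*}
The hypothesis $G_d\,(d_{\min}/d)<1$ says exactly that $x>G_d\ge 1$. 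Since $\beta>1$ and $G_d\ge1$, we have $G_d^{(\beta-1)/2}\le x^{(\beta-1)/2}$. It therefore suffices to prove the stronger bound $R\ge x^{(\beta-1)/2}$, stated in terms of the geometric separation ratio alone.

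First we study $R$ as a function of $a$. Differentiating gives $\partial_a R=(x^\beta-1)/(1+a)^2>0$, so $R$ is increasing in $a$. The worst case is therefore $a\to0$, where compute dominates even in the separated configuration and $R\to1$. This shows the proposition cannot hold without an implicit assumption that data movement actually matters. We would make that assumption explicit as the regime the paper intends: the separated configuration is movement-dominated in the mild sense $G_d\, (d\,d_{\min})^{\beta/2}\ge 2$, i.e.\ $a\,x^{\beta/2}\ge 2$. This is the condition that movement over the geometric-mean distance $\sqrt{d\,d_{\min}}$ costs at least twice a compute step.

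Granting this, we split into two cases. If $a\ge1$, then $R\ge a x^\beta/(2a)=x^\beta/2$. This is at least $x^{(\beta-1)/2}$ as soon as $x^{(\beta+1)/2}\ge2$, which holds because $x>G_d\ge1$. The boundary case $x\to1$ is trivial, since then the claimed factor is itself near $1$ and $R\ge1$ always. If $a<1$, then $1+a<2$, so $R> a x^\beta/2\ge x^{\beta/2}\ge x^{(\beta-1)/2}$ by the movement-dominance assumption. Combining the cases and then using $x>G_d$ gives $R\ge G_d^{(\beta-1)/2}$.

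The main obstacle is the first step: pinning down a model and a nondegeneracy hypothesis under which the stated exponent $(\beta-1)/2$, rather than something trivial, is what comes out. The monotonicity computation shows that without a hypothesis like $a\,x^{\beta/2}\gtrsim 1$ the ratio can be arbitrarily close to $1$. We would first try to match the exponent to the geometric-mean distance $\sqrt{d\,d_{\min}}$, which is where the half-power naturally arises. If the authors intend a cruder model in which compute energy is neglected, the same argument gives the much stronger factor $x^\beta>G_d^\beta$, and the claimed bound follows a fortiori.
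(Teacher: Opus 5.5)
Your diagnosis is right as far as it goes. If $G_d$ is the per-bit, per-meter constant, the ratio is $R=(1+ax^\beta)/(1+a)$, which tends to $1$ as $a\to0$, so the claim fails under that reading. But adding the hypothesis $G_d(d\,d_{\min})^{\beta/2}\ge2$ proves a different statement. The missing idea is the normalization the paper uses in its proof. It sets $G_d:=E^{\mathrm{trad}}_{\mathrm{move}}/E_{\mathrm{comp}}$, the movement-to-compute ratio at the actual separation $d$; in your notation, $G_d=ax^\beta$. The energy ratio is the same function, now written $\Gamma=(1+G_d)/(1+G_d r^\beta)$ with $r=d_{\min}/d$. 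Under this normalization the stated hypothesis $G_d\ge1$ is exactly the movement-dominance condition you were trying to supply, so nothing has to be added. The argument is short. $\Gamma$ is decreasing in $r$ and $r<1/G_d$, so $\Gamma>(1+G_d)/(1+G_d^{1-\beta})$. With $u=G_d^{(\beta-1)/2}$, the desired $\Gamma\ge u$ reduces to $(u-1)+u(G_d-u)\ge0$, which holds since $1\le u\le G_d$. The inequality $u\le G_d$ is exactly where $\beta<3$ enters. Your argument never uses $\beta<3$, a symptom of aiming at a different claim.

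Your replacement hypothesis also misses the intended regime. Together with $G_d d_{\min}/d<1$ it forces $2\le G_d(d\,d_{\min})^{\beta/2}<d^{1+\beta/2}d_{\min}^{\beta/2-1}$. That is impossible for $\beta\ge2$ and $d_{\min}<d<1$\,m, so for the paper's own example ($\beta=2$, $d\approx10^{-2}$\,m) your result is vacuous. Separately, in the case $a\ge1$ the step ``$x^{(\beta+1)/2}\ge2$ because $x>G_d\ge1$'' is false for $1<x<2^{2/(\beta+1)}$. The fallback ``$R\ge1$'' does not give $R\ge x^{(\beta-1)/2}$ for any $x>1$. The repair is your own monotonicity in $a$: for $a\ge1$ it gives $R\ge(1+x^\beta)/2\ge x^{\beta/2}$ by AM--GM.
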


\begin{proof}
Consider a workload operating on data with information mass $M$ over
time $T$. We compare the following two architectures.
\begin{enumerate}
  \item \textbf{Traditional Architecture:} computation and data
        separated by distance $d$.
  \item \textbf{Gravitational Architecture:} computation co-located
        with data at separation $d_{\min}$.
\end{enumerate}

Let $S$ be entropy per operation and $f$ the operation rate. Over
time $T$, total bits accessed is
\begin{equation}
  N = S \cdot f \cdot T.
\end{equation}

\paragraph{Traditional Architecture.}
Data moves across distance $d$. Movement energy follows the power-law
scaling with interconnect length:
\begin{equation}
  E_{\mathrm{move}}^{\mathrm{trad}} = \alpha \cdot N \cdot d^{\beta}.
\end{equation}
Total energy consumption becomes
\begin{equation}
  E^{\mathrm{trad}}_{\mathrm{total}}
    = E_{\mathrm{comp}} + E_{\mathrm{move}}^{\mathrm{trad}}
    = E_{\mathrm{comp}} + \alpha \cdot S \cdot f \cdot T \cdot d^{\beta}.
\end{equation}

\paragraph{Gravitational Architecture.}
In this configuration, data movement occurs over minimal distance
$d_{\min}$. Total energy consumption becomes
\begin{equation}
  E^{\mathrm{grav}}_{\mathrm{total}}
    = E_{\mathrm{comp}} + E_{\mathrm{move}}^{\mathrm{grav}}
    = E_{\mathrm{comp}} + \alpha \cdot S \cdot f \cdot T \cdot d_{\min}^{\beta}.
\end{equation}

\paragraph{Analysis.}
We define
\begin{equation}
  G_d := \frac{E_{\mathrm{move}}^{\mathrm{trad}}}{E_{\mathrm{comp}}}.
\end{equation}
The movement energies become
\begin{align}
  E_{\mathrm{move}}^{\mathrm{trad}} &= G_d \cdot E_{\mathrm{comp}},\\
  E_{\mathrm{move}}^{\mathrm{grav}} &= G_d \cdot E_{\mathrm{comp}}
    \cdot \left(\frac{d_{\min}}{d}\right)^{\beta}.
\end{align}

The energy advantage factor $\Gamma$, defined as the ratio of
traditional to gravitational energy consumption, becomes
\begin{equation}
  \Gamma = \frac{1 + G_d}{1 + G_d \cdot (d_{\min}/d)^{\beta}}.
\end{equation}

With $G_d \cdot (d_{\min}/d) < 1$, we obtain
\begin{equation}
  \Gamma \ge G_d^{\frac{\beta-1}{2}}.
\end{equation}

To prove this inequality, let $r = d_{\min}/d$ where $0 < r < 1$.
The ratio becomes
\begin{equation}
  \Gamma = g(r) = \frac{1 + G_d}{1 + r^{\beta} G_d}.
\end{equation}
Since $g(r)$ decreases in $r$, the minimum occurs as $r$ approaches
its upper bound. Given $G_d \cdot r < 1$, we prove the inequality
when $r = 1/G_d$. Derivative analysis confirms that
\begin{equation}
  \frac{1 + G_d}{1 + \left(\frac{1}{G_d}\right)^{\beta} G_d}
  \ge G_d^{\frac{\beta-1}{2}}.
\end{equation}
\end{proof}

Typical modern systems have operation-operand separation
$d \approx 10^{-2}$\,m (processor to main memory). Minimum feasible
separation is $d_{\min} \approx 10^{-6}$\,m (on-chip), yielding
$d/d_{\min} \approx 10^4$.

For systems where $G_d > 10^3$ with $\beta = 2$, the bound simplifies
to $\sqrt{G_d}$, implying energy savings over $30\times$. This matches
empirical gains in processing-in-memory systems.

\section{Validation of $G_d$}

We analyzed published measurements from various computing platforms to
validate our framework.

\paragraph{Horowitz's Analysis.}
Horowitz~\cite{horowitz2014} provided energy measurements for 45\,nm
technology at 0.9\,V, establishing baseline energy costs for
computation versus data movement. Floating-point operations consume
0.4--3.7\,pJ depending on operation type and precision, while cache
access requires 10--100\,pJ based on cache size. Most critically,
off-chip 64-bit memory access costs 1300--2600\,pJ. Using maximum
operation energy ($\approx 4$\,pJ), we calculate $G_d \approx 2.5$--25
for cache access and $G_d \approx 325$--650 for off-chip DRAM access.

\paragraph{Google TPUv4i.}
The energy analysis for Google's TPUv4i~\cite{jouppi2021} demonstrates
the persistence of the operation-operand disjunction across technology
generations. While logic energy improved 3-fold from 45\,nm to 7\,nm
(Int32: $0.1 \rightarrow 0.03$\,pJ; FP32: $3.7 \rightarrow 1.31$\,pJ),
DDR3/4 memory access remained constant at 1300\,pJ. Advanced memory
technologies show modest improvements: GDDR6 (350--480\,pJ) and HBM
(250--450\,pJ) yield $G_d \approx 190$--366, indicating that the
fundamental physics of data movement has not improved in proportion to
computation.

Building on Horowitz's foundational measurements and Google's energy
analysis, we can establish $G_d$ calculations. Floating-point operations
consume 1.31\,pJ for FP32 in 7\,nm technology~\cite{jouppi2021}, while
DDR5 memory access (operating at 1.1\,V with $\sim$20\% power reduction
from DDR4) still requires approximately 1300\,pJ per 64-bit access for
off-chip memory operations. This yields $G_d = 1300/1.31 \approx 992$
for main memory access.

\paragraph{UPMEM Processing-in-Memory.}
UPMEM reported~\cite{devaux2019} 20$\times$ power efficiency gains for
64-bit operand operations. More specifically, conventional servers
require 3010\,pJ (3000\,pJ movement $+$ 10\,pJ computation) while
PIM-DRAM requires only 170\,pJ (150\,pJ movement $+$ 20\,pJ
computation). This yields $G_d = 150/20 = 7.5$ for the PIM system.
However, independent academic evaluation of real UPMEM
hardware~\cite{falevoz2024} reported mixed results: energy savings
ranging from 11--20\% for some database operations, but energy
increases of 58\% for others, highlighting that UPMEM's energy
efficiency is highly workload-dependent.

\paragraph{Biological Systems.}
Brain energy analyses show a power consumption of $\approx
20$\,W~\cite{harris2012} while performing $\approx 10^{18}$ operations
per second~\cite{nist2023}, yielding $\sim$0.02\,fJ per operation.
Though neural operations differ fundamentally from digital FLOPs, this
suggests $G_d < 1$ for biological systems.

\section{Discussion}

\subsection{Architectural Implications}

Our framework provides guidance for architecture design and identifies
when different approaches become advantageous.

PIM architectures directly address the operation-operand disjunction
problem. Embedding computation within memory arrays eliminates energy
penalties from processor-memory data movement. With $G_d$ values of
200--1000 for memory access, PIM becomes an alternative to conventional
designs. However, standardized programming models and compiler
toolchains for PIM remain research challenges.

Three-dimensional integration offers another path to reducing
disjunction. The energy relationship $E \propto d^{\beta}$ shows that
moving from centimeter-scale to micrometer-scale separation yields
substantial benefits. This requires solutions to thermal management,
manufacturing complexity, and design-tool challenges.

Cache hierarchies benefit from understanding when misses become
expensive. As $G_d$ increases, cache miss penalties grow, making hit
rates more critical for efficiency. This guides cache management
policies and reinforces data-locality optimization.

Programming models must evolve beyond treating memory as a uniform
abstraction. Location-aware programming could incorporate physical
placement in type systems and make energy costs explicit in program
semantics, driving informed decisions on data placement and movement.

Compiler technology needs to analyze data access patterns to identify
information-mass concentrations, solve placement optimization problems,
and generate energy-efficient solutions rather than just fast code.
This represents a shift from performance-oriented to energy-oriented
compilation.

Extrapolating wire-scaling trends suggests $G_d$ may reach 3000 by
2030. Such systems will require processor-data separations under
1.6\,cm to maintain energy-balanced operation, making close integration
necessary rather than merely advantageous.

\subsection{Limitations and Future Work}

Our framework makes several simplifying assumptions. The power-law
energy model $E \propto d^\beta$ approximates complex interconnect
behavior but may not capture all physical effects in specific
technologies. The gravitational analogy, while useful for understanding
placement effects, should not be overextended beyond its mathematical
utility.

We have not addressed dynamic workloads where data access patterns
change over time, or systems with heterogeneous compute elements having
different energy characteristics.

Validation data comes primarily from published measurements and
specifications rather than direct experiments. More comprehensive
validation across diverse architectures and workloads would strengthen
the framework.

Future work should incorporate temporal dynamics in data access
patterns, extend the framework to heterogeneous architectures, and
develop practical tools for measuring $G_d$ in existing systems.
Integration with existing performance analysis frameworks would make
these concepts more accessible to system designers.

\section{Conclusion}

We presented a mathematical framework for analyzing energy efficiency
in modern computing systems. At its core is the operation--operand
disjunction constant ($G_d$), which quantifies the energy penalty
incurred when computation is separated from data. This penalty has
risen dramatically with advances in technology, driven by the scaling
limits of interconnects. Our analysis shows that co-locating operations
with operands can reduce energy consumption by factors proportional to
$G_d^{(\beta-1)/2}$. Validation against published measurements confirms
that $G_d$ values now range from 200 to 1000 in modern systems, making
operand-centric architectures increasingly favorable. As $G_d$ continues
to rise, systems that integrate computation with data will increasingly
outperform those that separate them. This framework offers a
quantitative foundation for evaluating architectural trade-offs and
guiding the design of more energy-efficient computing systems. While the
metaphor of gravity aids intuition, the mathematics provides concrete
guidance in an era dominated by the cost of data movement.

\section*{Acknowledgments}

The authors would like to thank Tony Chey, Chairman of SK Inc., for
providing inspiration and support for this work.

\bibliographystyle{plain}
\bibliography{data_gravity_arxiv}   

\end{document}